
\RequirePackage{ifpdf}
\ifpdf
\documentclass[11pt,pdftex]{article}
\else
\documentclass[11pt,dvips]{article}
\fi

 \usepackage{epsfig}
 \usepackage{color,graphics}
 \usepackage{latexsym,amsmath,amsfonts,amsthm,amssymb}
 \usepackage{euscript,amsbsy}
 \usepackage{graphicx}
 \usepackage{caption}

\usepackage{latexsym}
\usepackage{amscd}
\usepackage{color,graphics}
\usepackage{graphicx}


\usepackage{ifpdf}   
\DeclareGraphicsExtensions{.pdf,.png,.jpg,.mps}


 \setlength{\topmargin}{0in}
 \setlength{\textheight}{8.5in}
 \setlength{\oddsidemargin}{0in}
 \setlength{\evensidemargin}{0cm}
  \setlength{\textwidth}{16cm}


\input colordvi    

\numberwithin{equation}{section}

\sloppy

\newcommand{\sign}{{\mathrm{sgn}}}

\newtheorem{theo}{Theorem}[section]
\newtheorem{prop}[theo]{Proposition}
\newtheorem{lem}[theo]{Lemma}
\newtheorem{cor}[theo]{Corollary}

\newtheorem{defi}[theo]{Definition}
\newtheorem{rem}[theo]{Remark}
\newtheorem{ass}[theo]{Assumption}
\newtheorem{conj}[theo]{Conjecture}

\newcommand{\ba}{\begin{array}}
\newcommand{\ea}{\end{array}}
\newcommand{\bea}{\begin{eqnarray}}
\newcommand{\eea}{\end{eqnarray}}
\newcommand{\bead}{\begin{eqnarray*}}
\newcommand{\eead}{\end{eqnarray*}}
\newcommand{\be}{\begin{equation}}
\newcommand{\ee}{\end{equation}}
\newcommand{\bed}{\begin{displaymath}}
\newcommand{\eed}{\end{displaymath}}

\newcommand{\bco}{\begin{conj}}
\newcommand{\eco}{\end{conj}}

\newcommand{\bl}{\begin{lem}}
\newcommand{\el}{\end{lem}}
\newcommand{\bp}{\begin{prop}}
\newcommand{\ep}{\end{prop}}
\newcommand{\bt}{\begin{theo}}
\newcommand{\et}{\end{theo}}
\newcommand{\bpr}{\begin{proof}}
\newcommand{\epr}{\end{proof}}

\newcommand{\bc}{\begin{cor}}
\newcommand{\ec}{\end{cor}}

\newcommand{\br}{\begin{rem}}
\newcommand{\er}{\end{rem}}
\newcommand{\bd}{\begin{defi}}
\newcommand{\ed}{\end{defi}}
\newcommand{\bass}{\begin{ass}}
\newcommand{\eass}{\end{ass}}


\begin{document}

\title {A continuum limit for the Kronig-Penney model}
\author{Matteo Colangeli$^{1}$, Sokol Ndreca$^{2}$, Aldo Procacci$^{1}$ \\\\
\footnotesize{$^1$Dep. Matem\'atica-ICEx, UFMG, CP 702 Belo Horizonte - MG, 30161-970 Brazil}\\
\footnotesize{$^2$Dep. Estatistica-ICEx, UFMG, CP 702 Belo Horizonte - MG, 30161-970 Brazil}\\
\small{emails: {colangeli@mat.ufmg.br};~{sokol@est.ufmg.br};~{aldo@mat.ufmg.br}}\\}
\maketitle
\begin{abstract}\noindent
We investigate the transmission properties of a quantum one-dimensional periodic system of fixed length $L$, with $N$ barriers of constant height $V$ and width $\lambda$,
and $N$ wells of width $\delta$. In particular, we  study the behaviour of the transmission coefficient in the limit $N\to \infty$,  with $L$ fixed. This is achieved
by  letting  $\delta$ and $\lambda$
both scale as $1/N$, in such {a} way that their ratio $\gamma= \lambda/\delta$ is a fixed parameter  characterizing the model.
In this continuum limit the multi-barrier system behaves as it were constituted by a unique barrier
of constant height $E_o=(\gamma V)/(1+\gamma)$. The analysis of the dispersion relation of the model shows the presence of forbidden energy bands at any finite $N$.
\end{abstract}

\noindent {\bf Keywords}: Kronig-Penney model, Schr\"{o}dinger  Equation, Chebyshev polynomials, continuum limit.
\vskip.3cm
\noindent{\bf Mathematical subject classifications}: 81H20, 81T27, 81F30

\section{Introduction}

The Kronig-Penney (K-P) model is one of the few solvable models in quantum mechanics which makes it possible to investigate the properties of electronic transport in real solids.\\
In their seminal paper \cite{kp}, R. de L. Kronig and W. G. Penney discussed the dispersion relation characterizing the transmission of an electron through a periodic potential in a one-dimensional domain. They were able to unveil the ``opening'', at the edges of the Brillouin zones, of the continuous quadratic curve typical of a free particle,
{thus marking} the onset of forbidden energy bands.
The relevance of the K-P model is two-fold: on the one hand, being a model amenable to an analytical solution, it enables {to study the occurrence} of forbidden bands {via} the Bloch Theorem  \cite{bloch}. On the other hand, the model proves useful to {highlight} the role of the periodic potential in determining whether the system {carries} the properties of an assembly of isolated wells, each equipped with a discrete energy spectrum,
or, rather, as a superlattice, characterized by continuous energy levels, possibly separated by forbidden bands.
In other words, by tuning the strength of the  potential in the K-P model, one is able to recover the wide range of conduction properties of real solids,
i.e. conductors, semiconductors or insulators, depending on the value of the Fermi energy in the resulting band structure. \\
The original K-P model postulates a periodic system of infinite size in which the electron-phonon interaction is disregarded, so that boundary and dissipative
effects in the bulk can be neglected.
Thus, the chance of interpreting some of the properties of real solids by means of a simple, idealized, model encouraged a vast literature on the transport
theory of quantum {multi-barrier} systems \cite{cho,esp}
and paved the way to the growing field of mesoscopic physics \cite{esatsu,esatsu2,rau}. Moreover, the K-P model is also a source of inspiration for the modeling, through the prism of solid state physics, of some promising, recently engineered, materials, e.g. the graphene  \cite{park,gatt,masir}.\\
In this work, we consider a variant of the standard K-P model consisting of an array of periodic cells with finite total length.
We then investigate the electronic transport properties of the periodic multi-barrier system in the \textit{continuum limit}, i.e., we let the number of cells diverge, and simultaneously rescale the size of each cell, so as to keep the total length of the sample fixed. Such procedure leads to
a limiting behaviour of the K-P model which differs{, in general,}  from the {thermodynamic} limit discussed in the literature (see e.g. Ref. \cite{kam}).\\
Thus, the investigation of the continuum limit of the K-P model {makes it possible} to explore the mathematical properties of a paradigmatic quantum mechanical system
under a different perspective and {also clarify} the physical implications of the invoked limiting procedure.
At the same time, our investigation also points towards the development of an ``effective''
theory for the low-dimensional samples of interest in the modern semiconductor technology, characterized
by a finite length and made of a {typically} large number of {layers}.\\
The paper is organized as follows.\\
In Sec. \ref{sec:sec1} we illustrate the general features of the K-P model under investigation.\\
In Sec. \ref{sec:sec2} we review the formalism, based on the transfer matrix technique, which makes it possible to determine, for an arbitrary number of barriers, the analytical expression of the transmission coefficient.\\
In Sec. \ref{sec:sec3} we provide the definition of the continuum limit of the K-P model and state our main result, concerning the asymptotic properties of the transmission coefficient.\\
In Sec. \ref{sec:sec4} we derive, via the Bloch Theorem, the dispersion relation of our finite K-P model, and discuss the appearance of energy bands, by also comparing with the results known for the original K-P model.\\
Conclusions are drawn in the final Sec. \ref{sec:concl}.

\section{The model}
\label{sec:sec1}

We consider a periodic one-dimensional system made of $N$ cells on a lattice, each cell consisting of one barrier
and one well with lengths denoted, respectively, by $\lambda$ and $\delta$. The total length of the system is $L=N p$, with $N\in \mathbb{N}$,
 where $p=\lambda+\delta$ denotes the period of the lattice. {As shown in Fig. \ref{barriers}, the barriers have constant height {$V$}
 and are delimited by a set of $2N$ points, denoted by $x_0=0,...,x_{2N-1}=L-\delta$, hereafter called \textit{nodes}.
\begin{figure}
   \begin{center}
   \includegraphics[width=0.7\textwidth]{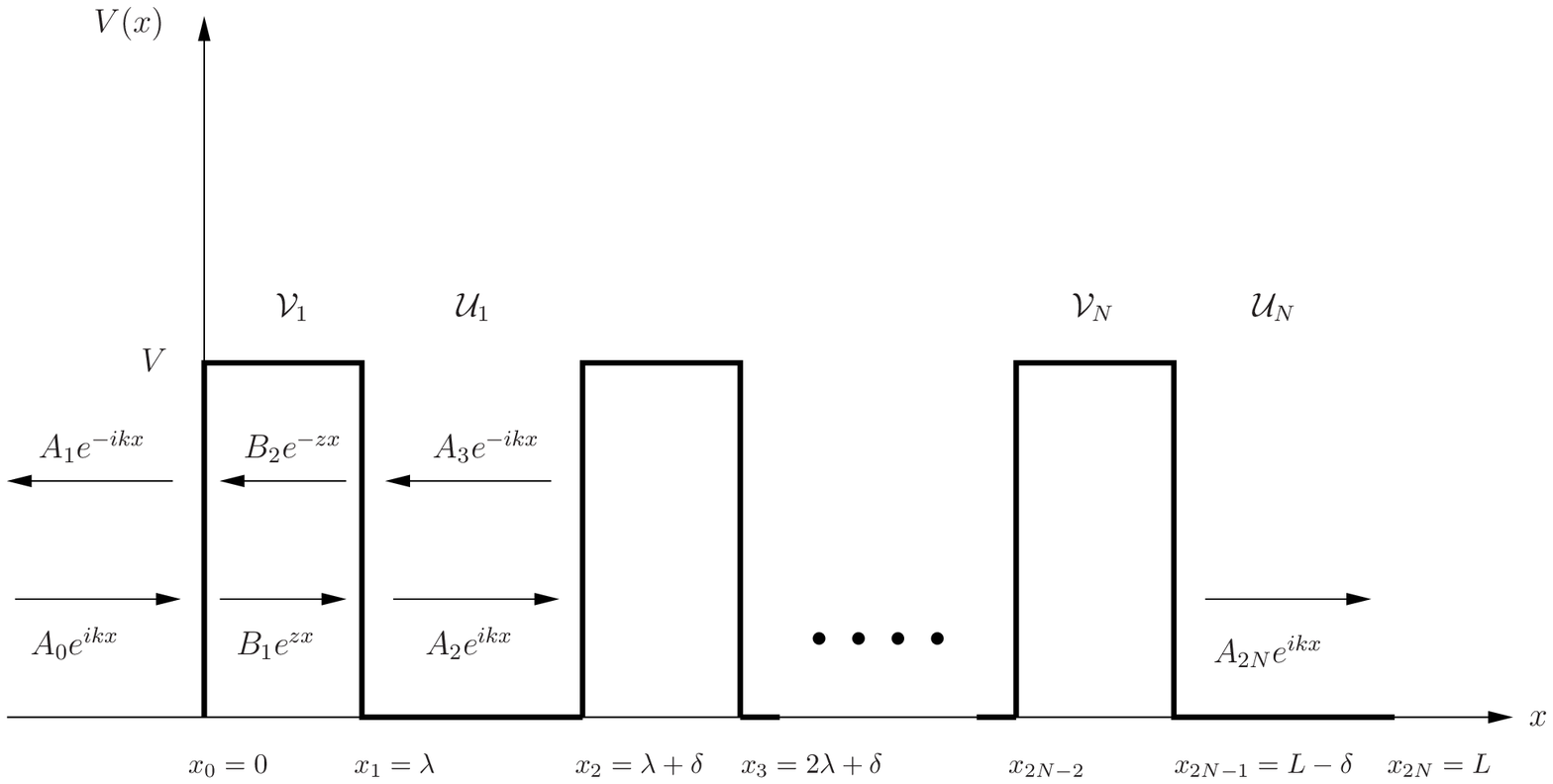}
   \caption{{Kronig-Penney model with finite total length $L$}, consisting of $N$ potential barriers with height $V$ and width $\lambda${, and $N$ wells} with width $\delta$.}\label{barriers}
   \end{center}
\end{figure}
We also denote, for $\ell=1,2,\dots, N$, $\mathcal{U}_\ell= [x_{2\ell-1}, x_{2\ell}]$ and $\mathcal{V}_\ell= (x_{2\ell-2}, x_{2\ell-1})$.\\
This model is described by a wavefunction $\Psi(x)$, obeying the time independent Schr\"{o}dinger Equation
\be
\left[-\sigma\frac{d^2}{dx^2}+V(x)\right]\Psi(x)=E\Psi(x) \quad \text{for $x\in[0,L]$} \label{se}
\ee
with $\sigma=\hbar^2/(2m)$, where $\hbar$ is the Planck constant and $m$, $E$ denote, respectively, the rest mass and the energy of the electron. The potential $V(x)$, in Eq. \eqref{se}, is defined as
\be
V(x)=V~~~ \text{for $x\in \mathcal{V}_\ell$} \quad, \quad V(x)=0 ~~~ \text{for $\{x\in \mathcal{U}_\ell$, ~ $x\le0$,~ $x> L$\}}
\label{potent}
\ee
We measure lengths in nanometers and the energy in units of $0.038$ electronvolts, in such a way that  hereafter  $\sigma=1$.} \\
Then the solution of Eq. (\ref{se}) takes the form:

\be
\Psi(x)=\left\{\begin{array}{cccc}
A_0e^{ikx}+A_1e^{-ikx} & \hbox{if $x\le 0$}\\ \\
A_{2N} e^{ikx} &  \hbox{if $x> L$}\\ \\
A_{2\ell} e^{ikx}+A_{2\ell+1} e^{-ikx} &  \hbox{if $x\in \mathcal{U}_\ell$} \\ \\
B_{2\ell-1} e^{zx}+B_{2\ell} e^{-zx} &  \hbox{if $x\in \mathcal{V}_\ell$ }
               \end{array}\right. \label{psi}
\ee with \be k=\sqrt{{E}} \quad \text{and} \quad z=\sqrt{{V-E}}
\label{ener} \quad \ee \noindent The boundary conditions prescribe
$A_0>0$ for the amplitude of the plane wave entering from the left
boundary and $A_{2N+1}=0$ (no wave enters or is reflected from the
right boundary).
One may define the transmission coefficient $S_N$
for the multi-barrier system as \be S_N=\frac{|A_{2N}
|^2}{|A_{0}|^2} \label{J} \ee
In the next Section  we present an explicit derivation of $S_N$ in terms of the Chebyshev polynomials of the second kind. \\
To this aim, we will also introduce the basic notation used in the paper.

\section{General results}
\label{sec:sec2}

\noindent
Let us introduce the notations
\be
\mathbf{\Delta}[\alpha x]=\left( \begin{array}{c c}
e^{\alpha x} & 0 \\
0 &  e^{-\alpha x}
\end{array} \right) \label{not0}
\ee
and
\be
\mathbf{T}[\alpha,x]=\left( \begin{array}{c c}
e^{\alpha x} & e^{-\alpha x} \\
\alpha e^{\alpha x} & -\alpha e^{-\alpha x}
\end{array} \right)=\mathbf{T}[\alpha,0]\mathbf{\Delta}[\alpha x] \label{not}
\ee
Using the transfer matrix method, see e.g. Ref. \cite{col2013}, one may relate the amplitudes $A_0,A_1$ of the incoming wave corresponding for $x\le 0$
to the amplitude $A_{2N}$ of the wave outgoing the $N$-th barrier.
Indeed, by imposing the standard conditions of continuity of the wavefunction and its first derivative at the nodes, one finds, after some algebra

\be
\left( \begin{array}{c}
A_{0} \\
A_{1}
\end{array} \right)=\prod_{j=1}^N \mathbf{Q}_j \left( \begin{array}{c}
A_{2N} \\
0
\end{array}\right)  \label{tranmat}
\ee
where the $j$-th transfer matrix $\mathbf{Q}_j$ belongs to $SL(2,\mathbb{C})$ and is defined as
\be
\mathbf{Q}_j=\mathbf{T}^{-1}[ik,x_{2j-2}]\mathbf{T}[z,x_{2j-2}]\mathbf{T}^{-1}[z,x_{2j-1}]\mathbf{T}[ik,x_{2j-1}] \quad  \text{for} \quad j=1, 2, \dots, N
\ee
Hence, using Eq. (\ref{not}), one may rewrite Eq. (\ref{tranmat}) in the following form \cite{leo2011}
\be
\left( \begin{array}{c}
A_{0} \\
A_{1}
\end{array} \right)=\mathbf{\Delta}[ik\delta] \mathbf{M}^N\mathbf{\Delta}[ik(L-\delta)] \left( \begin{array}{c}
A_{2N} \\
0
\end{array}\right)  \label{tranmat2}
\ee
where
\be
 \mathbf{M}=\mathbf{\Delta}[-ik\delta]\mathbf{T}^{-1}[ik,0]\mathbf{T}[z,0]\mathbf{\Delta}[-z\lambda]\mathbf{T}^{-1}[z,0]\mathbf{T}[ik,0] \quad \label{matrix1}
\ee
Note that, using the vector notation
$$
\Psi(x)=\mathbf{\Delta}[ikx] \left( \begin{array}{c}
A_{2\ell} \\
A_{2\ell+1}
\end{array}\right) \quad \text{for $x\in \mathcal{U}_\ell$}
$$
and
$$
\Psi(x)=\mathbf{\Delta}[ikx] \left( \begin{array}{c}
A_{0} \\
A_{1}
\end{array}\right) \quad \text{for $x\le 0$}
$$
formula \eqref{tranmat2} implies that
\be
{\Psi(x)\big\vert_{x=L}=\left(\mathbf{\Delta}[ik\delta]\mathbf{M}^{-1}\mathbf{\Delta}^{-1}[ik\delta]\right)^{N}\Psi(x)\vert_{x=0}}\label{psiper}
\ee
From {Eqs.} \eqref{J} and (\ref{tranmat2}), one obtains
\be\label{newJ} S_N= \frac{1}{|(\mathbf{M}^N)_{11}|^2} \ee \noindent
A straightforward calculation shows that the entries $m_{ij}$  of the $2\times 2$  matrix
$\mathbf{M}$ are given by
\bea
\Re{(m_{11})}&=& \cos(k\delta) \cosh(z\lambda) + \frac{z^2 - k^2}{2 k z} \sin(k\delta) \sinh(z\lambda)   \nonumber\\
\Im{(m_{11})}&=& -\sin(k\delta) \cosh(z\lambda) + \frac{z^2-k^2}{2 k z} \cos(k\delta) \sinh(z\lambda) \nonumber\\
\Re{(m_{12})}&=& \frac{V}{2 k z} \sin(k\delta) \sinh(z\lambda)   \nonumber\\
\Im{(m_{12})}&=&\frac{V}{2 k z} \cos(k\delta) \sinh(z\lambda)  \label{entries}
\eea
{and}
\be\label{entries2}
m_{22}= m_{11}^*  \quad  \text{and} \quad m_{21}= m_{12}^*  \quad
\ee
\noindent
In the sequel, we will make use of the shorthand notation $\Phi=\Re{(m_{11})}$, so that
\be\label{phi0}
\Phi= \cos(k\delta) \cosh(z\lambda) + \frac{z^2 - k^2}{2 k z} \sin(k\delta) \sinh(z\lambda)
\ee
Denoting the eigenvalues of $\mathbf{M}$ by $\mu_1$ and $\mu_2$, one finds
\be \label{mu}
\mu_1=  \Phi - \sqrt{\Phi^2- 1} \quad \text{and} \quad
\mu_2=\Phi + \sqrt{\Phi^2- 1}
\ee
Note that
\be\label{mumu}
\mu_1 \mu_2 =1
\ee
\noindent
i.e.  $\mathbf{M}$ is an element of $SL(2,\mathbb{C})$.
Moreover, depending on the value of $\Phi$, the eigenvalues $\mu_1$ and $\mu_2$ can be real or complex-valued.\\
Let us now provide the general expression of the transmission coefficient $S_N$ in terms of the Chebyshev polynomials of the second kind.
\begin{lem} \label{importante}
Let us consider the model ruled by Eq. (\ref{se}). Then,
for any $N\in\mathbb{N}$ and for any $\delta,\lambda,V \in \mathbb{R^{+}}$, the transmission coefficient $S_N$ attains the structure
\be
{S_N
= \left[1+\left(\frac{V}{2kz}\sinh(z\lambda)U_{N-1}(\Phi)\right)^2\right]^{-1}}  \label{SKP}
\ee
where $U_{N}(\Phi)$ are the Chebyshev polynomials of the second kind in the variable $\Phi$.
\end{lem}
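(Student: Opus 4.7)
The plan is to exploit the $SL(2,\mathbb{C})$ structure of $\mathbf{M}$ to reduce $\mathbf{M}^{N}$ to an affine combination of $\mathbf{M}$ and $\mathbf{I}$ whose coefficients are Chebyshev polynomials of the second kind, and then to isolate the $(1,1)$-entry. Since $\det\mathbf{M}=1$ and, by the block-conjugate relations $m_{22}=m_{11}^{*}$, $m_{21}=m_{12}^{*}$, we have $\mathrm{tr}(\mathbf{M}) = m_{11}+m_{11}^{*} = 2\Phi$, so the Cayley--Hamilton identity reads $\mathbf{M}^{2}=2\Phi\,\mathbf{M}-\mathbf{I}$. A routine induction on $N$, driven by the three-term recursion $U_{N}(\Phi)=2\Phi\,U_{N-1}(\Phi)-U_{N-2}(\Phi)$ with $U_{-1}\equiv 0$, $U_{0}\equiv 1$, then yields
\[
\mathbf{M}^{N} = U_{N-1}(\Phi)\,\mathbf{M} - U_{N-2}(\Phi)\,\mathbf{I},
\]
so that $(\mathbf{M}^{N})_{11} = U_{N-1}(\Phi)\,m_{11} - U_{N-2}(\Phi) = \bigl(\Phi\,U_{N-1}-U_{N-2}\bigr) + i\,\Im(m_{11})\,U_{N-1}$.

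I would then compute $|(\mathbf{M}^{N})_{11}|^{2}$. The condition $\det\mathbf{M}=|m_{11}|^{2}-|m_{12}|^{2}=1$, combined with $|m_{11}|^{2}=\Phi^{2}+\Im(m_{11})^{2}$, gives $\Im(m_{11})^{2} = 1+|m_{12}|^{2}-\Phi^{2}$. From the explicit entries for $m_{12}$ one reads off $|m_{12}|^{2}=\bigl(\tfrac{V}{2kz}\sinh(z\lambda)\bigr)^{2}$, the trigonometric factors collapsing via $\sin^{2}(k\delta)+\cos^{2}(k\delta)=1$. Expanding the square and substituting,
\[
|(\mathbf{M}^{N})_{11}|^{2} = U_{N-1}(\Phi)^{2} - 2\Phi\,U_{N-1}(\Phi)U_{N-2}(\Phi) + U_{N-2}(\Phi)^{2} + \Bigl(\tfrac{V}{2kz}\sinh(z\lambda)\Bigr)^{2}U_{N-1}(\Phi)^{2}.
\]

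The crux of the argument---and essentially the only step that is not pure bookkeeping---is the Chebyshev identity
\[
U_{N-1}(\Phi)^{2} - 2\Phi\,U_{N-1}(\Phi)\,U_{N-2}(\Phi) + U_{N-2}(\Phi)^{2} = 1,
\]
valid for all $\Phi\in\mathbb{C}$. I would prove it by induction on $N$: substituting $U_{N-2}=2\Phi\,U_{N-1}-U_{N}$ into the left-hand side shows that it is invariant under the shift $N\mapsto N+1$, and the base case $N=1$ reduces to $U_{0}^{2}=1$. Plugging this back into the previous display collapses it to $1+\bigl(\tfrac{V}{2kz}\sinh(z\lambda)\,U_{N-1}(\Phi)\bigr)^{2}$, and the lemma follows by inserting this into $S_{N}=1/|(\mathbf{M}^{N})_{11}|^{2}$. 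Since every identity used is polynomial in $\Phi$, the argument is uniform in whether $|\Phi|\le 1$ (allowed bands) or $|\Phi|>1$ (forbidden bands), so no case split is required.
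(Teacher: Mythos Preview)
Your proof is correct, but the final step takes a different route from the paper's. Both arguments arrive at $\mathbf{M}^{N}=U_{N-1}(\Phi)\,\mathbf{M}-U_{N-2}(\Phi)\,\mathbf{I}$; you get there via Cayley--Hamilton and induction on the Chebyshev recursion, whereas the paper invokes the closed eigenvalue formula for powers of a $2\times 2$ matrix and then identifies the coefficients with Chebyshev polynomials. Your route has the advantage of being purely polynomial and so avoids the paper's case split between distinct and coincident eigenvalues. Where the two proofs really diverge is in extracting $|(\mathbf{M}^{N})_{11}|^{2}$: the paper reads off the \emph{off-diagonal} entry $(\mathbf{M}^{N})_{12}=U_{N-1}\,m_{12}$ and applies $\det\mathbf{M}^{N}=1$ together with the conjugate symmetry to get $|(\mathbf{M}^{N})_{11}|^{2}=1+|U_{N-1}\,m_{12}|^{2}$ in one stroke; you instead expand $|(\mathbf{M}^{N})_{11}|^{2}$ directly and then invoke the identity $U_{N-1}^{2}-2\Phi\,U_{N-1}U_{N-2}+U_{N-2}^{2}=1$ (equivalently $U_{N-1}^{2}-U_{N}U_{N-2}=1$). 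The paper's shortcut is slicker because it pushes the determinant condition to $\mathbf{M}^{N}$ rather than to $\mathbf{M}$, so the Chebyshev identity never needs to be proved separately; in fact the paper recovers that identity \emph{a posteriori} from $\det\mathbf{M}^{N}=1$ in the remark following the lemma. Your version is slightly longer but entirely self-contained.
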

\begin{proof}
We use a general formula for the $n$-th power of a $2\times2$ matrix, (see Ref. \cite{will}).
Given a $2\times2$ matrix $\mathbf{M}$, with (possibly coincident) eigenvalues $\mu_1$ and $\mu_2$, and denoting by $\mathbf{I}$ the $2\times2$ identity matrix, it holds
\begin{equation}
\mathbf{M}^N=\frac{\mu_1^N - \mu_2^N}{\mu_1 - \mu_2}\mathbf{M} -
 \frac{\mu_2\mu_1^N - \mu_1\mu_2^N}{\mu_1 - \mu_2}\mathbf{I}  \label{theo}
\end{equation}
for $\mu_1\neq\mu_2$, whereas, if $\mu=\mu_1=\mu_2$, it holds
\begin{equation}
\mathbf{M}^N=N \mu^ {N-1}\mathbf{M} -
 (N-1)\mu^N\mathbf{I} \label{theo2}
\end{equation}
\noindent
From Eqs. \eqref{mumu}, (\ref{theo}) and (\ref{theo2}) one can write $\mathbf{M}^N$ in Eq. (\ref{tranmat2}) as
\be
\mathbf{M}^N=U_{N-1}(\Phi)\mathbf{M}-U_{N-2}(\Phi)\mathbf{I} \label{Mn}
\ee
with
\be
U_{N-2}(\Phi) = \frac{\mu_1^{N-1} - \mu_2^{N-1}}{\mu_1 - \mu_2} \quad \text{and} \quad U_{N-1} (\Phi)= \frac{\mu_1^{N} - \mu_2^{N}}{\mu_1 - \mu_2}~ \label{cheb}
\ee
for distinct eigenvalues, or
\be
U_{N-2}(\Phi) = (N-1)\mu^N \quad \text{and} \quad U_{N-1} (\Phi)=N \mu^ {N-1}  \label{chebbis}
\ee
for coincident eigenvalues.\\
Note that
\be U_{-1}(\Phi)=0 \quad \text{and} \quad U_0(\Phi)=1 \label{incon}
\ee
From \eqref{mu} and Eqs. (\ref{cheb}), (\ref{chebbis}) one obtains the following recurrence relation
\be
U_N(\Phi)=2\hspace{0.7mm}\Phi\hspace{0.7mm}U_{N-1}(\Phi)-U_{N-2}(\Phi) \label{recur}
\ee
Equation (\ref{recur}), supplemented by the initial conditions (\ref{incon}), allows one to identify the functions $U_N(\Phi):\mathbb{R}\rightarrow\mathbb{R}$ with the Chebyschev polynomials of the second kind \cite{abram}.
It is easy to see, in particular, that the first entry $(\mathbf{M}^N)_{11}$ of the matrix $\mathbf{M}^N$ reads
\be
\left(\mathbf{M}^N\right)_{11} =\left[\Re{(m_{11})}+i \Im{(m_{11})}\right] U_{N-1}(\Phi)
-U_{N-2}(\Phi) \label{entrata}
\ee
Thus, using (\ref{recur}), one arrives at
\be
\left(\mathbf{M}^N\right)_{11}=
U_N(\Phi)-m_{22}\hspace{0.7mm}U_{N-1}(\Phi) \label{m11}
\ee
Similarly, one can also show that
\be
\left(\mathbf{M}^N\right)_{22}=
U_N(\Phi) - m_{11}\hspace{0.7mm}U_{N-1}(\Phi) \label{m22}
\ee
Therefore, the matrix $\mathbf{M}^N$ has the structure:
\be
\mathbf{M}^N = \left(
\begin{array}{ccc}\label{MN}
U_N-m_{22}\hspace{0.7mm}U_{N-1}& U_{N-1}\hspace{0.7mm}m_{12}\\
U_{N-1}\hspace{0.7mm}m_{21} & U_N-m_{11}\hspace{0.7mm}U_{N-1} \\
\end{array} \right)
\ee
\noindent
One can readily check that $\left(\mathbf{M}^N\right)_{11}=\left[\left(\mathbf{M}^N\right)_{22}\right]^*$
and $\left(\mathbf{M}^N\right)_{12}=\left[\left(\mathbf{M}^N\right)_{21}\right]^*$.\\
Moreover, since $\mathbf{M} \in SL(2,\mathbb{C})$, one also has  $\mathbf{M}^N \in SL(2,\mathbb{C})$. Hence, one finds
\be\label{mn}
\left | \left(\mathbf{M}^N\right)_{11}\right|^2=1+\left | \left(\mathbf{M}^N\right)_{12}\right|^2
\ee
and the proof follows by using Eqs. (\ref{newJ}), \eqref{entries} and \eqref{mn}.\\
\end{proof}
\br
Note that, by computing directly  $\det(\mathbf{M}^N)$ from \eqref{MN}, it must hold that
$$
U_{N-1}^2(\Phi)-U_{N}(\Phi)U_{N-2}(\Phi)=1
$$
which is a well known property of the Chebyshev polynomials of the second kind \cite{abram}.
\er
\noindent

\section{The continuum limit of the model}
\label{sec:sec3}

We now proceed with the investigation of the continuum limit of the finite K-P model. \\
Let us start providing the following
\begin{defi}
\label{continuum}
Let $\gamma,L \in \mathbb{R}^{+}$ be fixed.
The $(\gamma,L)$-continuum limit of the K-P model is found by taking{, in Eq. \eqref{se},} the limits $N\rightarrow \infty$, $\delta\rightarrow 0$ and $\lambda\rightarrow 0$, in such a way that $\lambda/\delta = \gamma$ and $N(\lambda+\delta)=L$.
\end{defi}
\noindent
Thus, differently from the standard K-P model, we let the quantities $\delta$, $\lambda$ and $p$ depend on $N$, and replace them, correspondingly, with the symbols $\delta_N$, $\lambda_N$ and $p_N$.
Similarly, we also replace $\Phi$ with $\Phi_N$, to take into account the dependence of $\Phi$ from $N$, via the explicit dependence from $\delta$ and $\lambda$, cf. Eq. \eqref{phi0}. \\
Clearly we have the relations
\be
\delta_N=\frac{1}{1+\gamma}\frac{L}{N} \quad \text{and} \quad \lambda_N=\frac{\gamma}{1+\gamma}\frac{L}{N} \label{def}
\ee
Let us also define, for later convenience,
\be
E_o=\frac{\gamma}{1+\gamma}V \label{Eo} \quad
\ee
We are now ready to state the main result of this paper, which provides the {expression of the transmission coefficient $S_N$ in (\ref{SKP}) in the continuum limit}, to be denoted by $\bar{S}$.
\begin{theo}\label{teo}
Let us consider the finite K-P model described by the time independent Schr\"{o}dinger Equation (\ref{se}).
Then, according to the Definition \ref{continuum}, the {continuum limit} of the transmission coefficient $S_N$ is given by
\be
{\bar{S}=\lim_{N\rightarrow\infty}S_N=\left[1+\frac{E_o^2}{4E}
\left[\frac{\sin\left({L\sqrt{E-E_o}}\;\right)}{\sqrt{E-E_o}}\right]^2
\right]^{-1}}
\label{SKP2}
\ee
\end{theo}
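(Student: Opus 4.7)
The plan is to substitute formula \eqref{SKP} from Lemma \ref{importante} into the scaling prescribed by Definition \ref{continuum}, and to compute the limit of its three ingredients separately: the prefactor $V/(2kz)$ (independent of $N$), the factor $\sinh(z\lambda_N)$, and the Chebyshev polynomial $U_{N-1}(\Phi_N)$. Since $\lambda_N=\gamma L/((1+\gamma)N)\to 0$, Taylor expansion immediately gives $\sinh(z\lambda_N)=z\lambda_N+O(N^{-3})$, and therefore $\frac{V}{2kz}\sinh(z\lambda_N)=\frac{E_o L}{2kN}(1+o(1))$ by \eqref{Eo}. The real work is the asymptotic expansion of $\Phi_N$. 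Expanding each trigonometric and hyperbolic factor in \eqref{phi0} to second order in $\delta_N$ and $\lambda_N$, and using $k^2=E$, $z^2=V-E$ together with the algebraic identity $(\gamma z^2-k^2)/(1+\gamma)=E_o-E$, a short bookkeeping calculation yields
\be
\Phi_N=1-\frac{L^2(E-E_o)}{2N^2}+O(N^{-4}).
\ee
This is the essential new input of the proof and explains why $E_o$ emerges as the effective barrier height.

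In the generic regime $E>E_o$ one has $\Phi_N<1$ for all large $N$, so I write $\Phi_N=\cos\theta_N$ with $\theta_N=(L/N)\sqrt{E-E_o}\,(1+O(N^{-2}))$ and invoke the classical identity $U_{N-1}(\cos\theta)=\sin(N\theta)/\sin\theta$. The polynomial then becomes $U_{N-1}(\Phi_N)=\sin(N\theta_N)/\sin\theta_N$, which combined with the $O(N^{-1})$ prefactor computed above yields
\be
\lim_{N\to\infty}\frac{V}{2kz}\sinh(z\lambda_N)\,U_{N-1}(\Phi_N)=\frac{E_o}{2k}\,\frac{\sin\bigl(L\sqrt{E-E_o}\,\bigr)}{\sqrt{E-E_o}}.
\ee
Squaring and using $k^2=E$ in \eqref{SKP} produces exactly \eqref{SKP2}.

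The main obstacle I anticipate is the case analysis required outside this generic regime. For $E<E_o$ one has $\Phi_N>1$ for large $N$, so the trigonometric representation must be replaced by its hyperbolic counterpart $U_{N-1}(\cosh\vartheta)=\sinh(N\vartheta)/\sinh\vartheta$ with $\vartheta_N=(L/N)\sqrt{E_o-E}\,(1+O(N^{-2}))$; the resulting limit involves $\sinh(L\sqrt{E_o-E})/\sqrt{E_o-E}$, which matches the right-hand side of \eqref{SKP2} once one recalls that $[\sin(L\sqrt w)/\sqrt w]^2$ extends to an entire function of $w\in\mathbb{R}$. The marginal points $E=E_o$, where $\Phi_N\to 1$ and $U_{N-1}(1)=N$, and $E=V$, where $z\to 0$ but $\sinh(z\lambda_N)/z\to\lambda_N$, are handled by direct Taylor expansion and join the two main regimes by continuity.
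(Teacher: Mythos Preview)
Your proof is correct and follows essentially the same approach as the paper: both start from \eqref{SKP}, Taylor-expand $\sinh(z\lambda_N)$ and $\Phi_N$ to obtain $\Phi_N=1-(E-E_o)L^2/(2N^2)+o(N^{-2})$, and then invoke the trigonometric (resp.\ hyperbolic) representation of $U_{N-1}$ according to whether $|\Phi_N|\le 1$ or $|\Phi_N|>1$. Your case split is organized directly by the sign of $E-E_o$ rather than by the nature of the eigenvalues of $\mathbf{M}$, and you treat the borderline cases $E=E_o$ and $E=V$ more explicitly than the paper does, but the underlying argument is the same.
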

\begin{proof}
In order to plug the expressions (\ref{def}) in (\ref{SKP}),  observe that it holds
\be
\sinh\left(\frac{z\gamma L}{1+\gamma}\frac{1}{N}\right)=\frac{z\gamma L}{1+\gamma}\frac{1}{N}+o\left(\dfrac{1}{N}\right)  \label{stima1}
\ee
Recalling \eqref{mumu}, we have that the eigenvalues $\mu_1$ and $\mu_2$ are either complex
conjugated, i.e. $\mu_1=\mu_2^*$, with $|\mu_1|=|\mu_2|=1$, or both real, such that $\mu_1=1/\mu_2$.
\noindent
Let us first suppose $\mu_1$ and $\mu_2$ to be complex.
Then we can write the eigenvalues in Eqs. (\ref{mu}) in the form
\be
\mu_1=e^{-i \Theta_N} \quad \text{and} \quad \mu_2=e^{i \Theta_N} \label{mubis}
\ee
where the phase $\Theta_N$ is real and is given by
\be
\Theta_N=\arctan\left(\frac{\sqrt{1-\Phi_N^2}}{\Phi_N} \right) \label{phase}
\ee
It is easy to check that
\be\label{fiN}
\Phi_N= 1-{(E-E_o)L^2\over 2N^2}+ o\left({1\over N^2}\right)
\ee
{Let} us put now
\be\label{fiN2}
\vartheta=\sqrt{E-E_o}
\ee
Note that when $E<E_o$ then, by \eqref{fiN}, we have that $\Phi_N>1$, for $N$ sufficiently large.
Therefore, since we are considering the case $|\Phi_N|\le 1$, we can assume $E\ge E_o$, i.e.  $\vartheta$  real.
Hence, the variable $\Theta_N$ has the asymptotic behaviour
\be \label{phase2}
\Theta_N=\frac{\vartheta L}{N}+o\left(\frac{1}{N}\right)
\ee
Plugging, now, \eqref{mubis} into (\ref{cheb}), one can write the Chebyschev polynomials $U_{N-1}(\Phi_N)$ as
\be
U_{N-1}(\Phi_N)=\frac{\sin(N \Theta_N)}{\sin(\Theta_N)}  \label{chebsin}
\ee
Then, from Eqs. (\ref{phase2}) and (\ref{chebsin}), one finds
\be
\lim_{N\rightarrow\infty}\left(\frac{U_{N-1}(\Phi_N)}{N}\right)^2=
\frac{\sin^2 (\vartheta L)}{(\vartheta L)^2} \label{stima2}
\ee
Let us now consider the eigenvalues $\mu_1$ and $\mu_2$ to be real, which is the case when
$|\Phi_N|>1$.  We write
\be
\mu_{1}=\sign(\Phi_N)e^{-\Xi_N} \quad \text{and} \quad \mu_{2}=\sign(\Phi_N)e^{\Xi_N}  \label{mua1}
\ee
with
\be
\Xi_N=\log|\Phi_N|+\log \left(1-\frac{\sqrt{\Phi_N^2-1}}{\Phi_N}\right)
\ee
After some algebra, one finds that
\be \label{loglog}
\Xi_N=\left\{\begin{array}{cc}
- i\frac{\vartheta L}{ N}+ o\left({1\over N}\right) &  \hbox{ if  $ \Phi_N>1$} \\ \\
i\left(\pi- \frac{\vartheta L}{ N}\right)+ \small{o}\left({1\over N}\right) & \quad \hbox{if  $ \Phi_N<-1 $}
 \end{array}\right.
\ee
Inserting Eq. \eqref{mua1} into (\ref{cheb}), one can write the Chebyschev polynomials $U_{N-1}(\Phi_N)$ as
\be
U_{N-1}(\Phi_N)= (\sign(\Phi_N))^{N-1}\frac{\sinh(N \Xi_N)}{\sinh(\Xi_N)} \label{chebsinh}
\ee
If $ \Phi_N>1$, by using  \eqref{loglog}, one obtains
\be \label{real1}
U_{N-1}(\Phi_N)= (\sign(\Phi_N))^{N-1}\frac{\sinh(- i \vartheta L)}{\sinh(- i\frac{\vartheta L}{ N})}=
(\sign(\Phi_N))^{N-1}\frac{\sin(\vartheta L)}{\sin(\frac{\vartheta L}{ N})}
\ee
Similarly, if $ \Phi_N<-1$, from  \eqref{loglog} it holds
\be  \label{real2}
U_{N-1}(\Phi_N)= (\sign(\Phi_N))^{N-1}\frac{\sinh(i(N\pi - \vartheta L))}{\sinh( i(\pi -\frac{\vartheta L}{ N}))}=(\sign(\Phi_N))^{N-1}\frac{\sin(\vartheta L)}{\sin(\frac{\vartheta L}{ N})}
\ee
Clearly, from Eqs. (\ref{real1}) and (\ref{real2}), one obtains again
\be \label{finalmente}
\lim_{N\rightarrow\infty}\left(\frac{U_{N-1}(\Phi_N)}{N}\right)^2=\frac{\sin^2 (\vartheta L)}{(\vartheta L)^2}
\ee
Finally, the claim follows from  using Eqs. (\ref{stima1}),  (\ref{stima2}) and \eqref{finalmente} in Eq. (\ref{SKP}).
This completes the proof of theorem.
\end{proof}
\br
Note that the sequence of functions $S_N$, as functions of $E$, converges pointwise but not uniformly
to $\bar{S}$ as $N\to\infty$ (see also figure \ref{S}).
\er

\br
{From Theorem \ref{teo}, the Landauer resistivity $\rho_N$, defined as}
\be
\rho_N=\frac{(1-S_N)}{S_N} \label{landauer}
\ee
also has a finite limit $\bar{\rho}$ as $N\to\infty$. Namely,
\be
{\bar{\rho}=\lim_{N\rightarrow\infty}\rho_N=\frac{1}{4E} \frac{\left(V\gamma\right)^2}{(1+\gamma)^2}
\left[\frac{\sin\left({L\sqrt{E-E_o}}\;\right)}{\sqrt{E-E_o}}\right]^2} \label{lanlim}
\ee
\noindent
{It is interesting to compare Eq. \eqref{lanlim} with the result reported in Theorem 3.1 of Ref. \cite{kam}. Observe that the factor $V\gamma$ appearing
in \eqref{lanlim} can be interpreted as the (constant) intensity of the Dirac deltas in Theorem 3.1 of  \cite{kam}, denoted therein by ``$V$'' (and corresponding to the constant $\Lambda$ defined in formula \eqref{limKP} ahead). On the other hand,} the limit $N\to \infty$ in
\cite{kam} corresponds to taking the limit $L\to \infty$ in {Eq. \eqref{lanlim}}. {Then,} one can notice the following.  {When $E<E_o$,} the resistivity
\eqref{lanlim} diverges exponentially in $L$; this corresponds to {the} item (1) of Theorem 3.1 in \cite{kam}
(i.e. $\bar\rho$ diverges exponentially in $N$ when $|\Phi|>1$). When $E=E_o$, then  \eqref{lanlim} diverges as $L^2$,
in agreement with {the} item (2)i of Theorem 3.1 in \cite{kam} (i.e. $\bar\rho$ diverges as $N^2$ when $|\Phi|=1$). Finally, when $E>E_o$,
{we have that $\bar{\rho}$} is proportional to $\sin^2(L)${, hence is O(1)}
and not converging as $L\to \infty${, as also stated in the item (2)ii of Theorem 3.1 of \cite{kam}} (i.e. $\bar\rho$ is $O(1)$  and not converging as $N\to \infty $ when $|\Phi|<1$).
\er
\begin{figure}
\centering
\includegraphics[width=7.5cm]{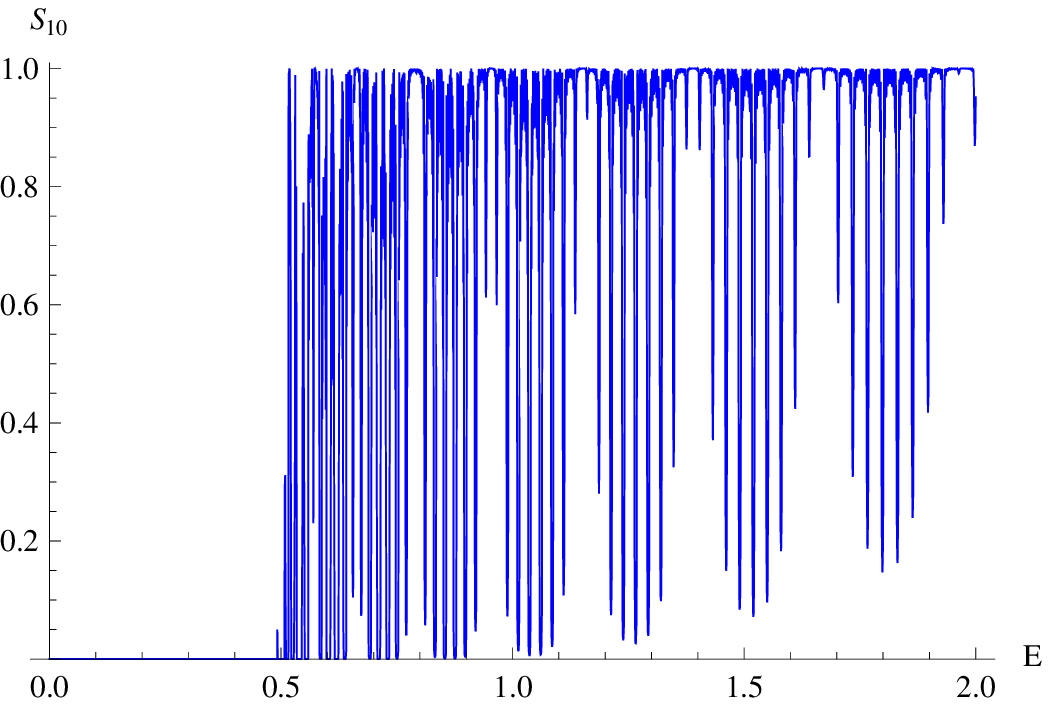}
\hspace{2mm}
\includegraphics[width=7.5cm]{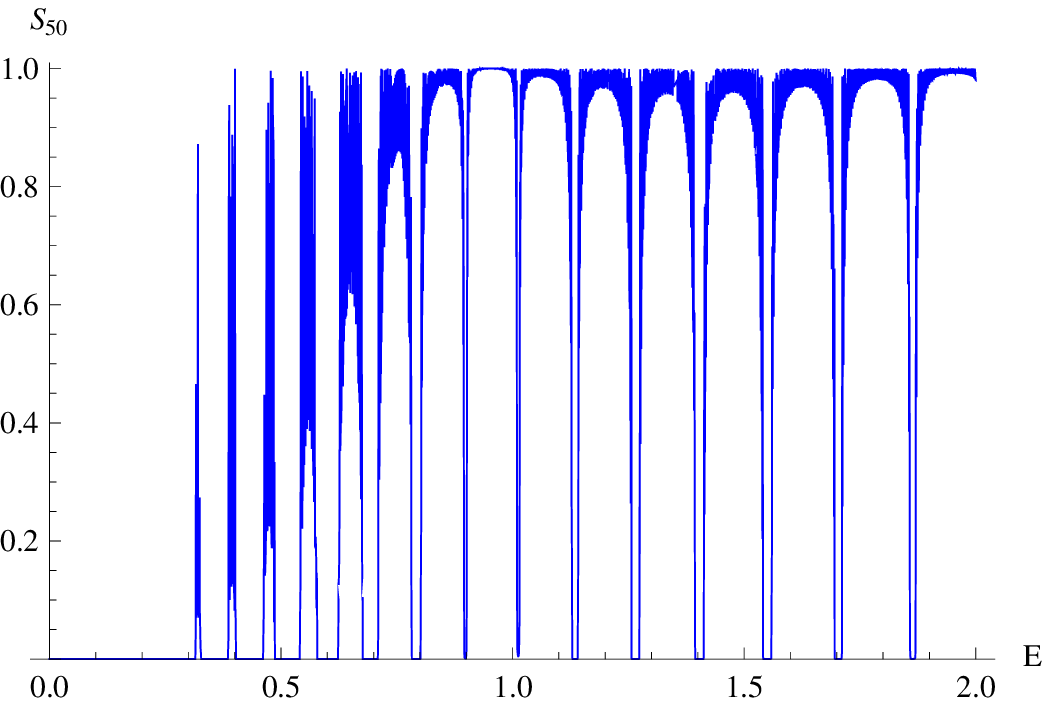}
\vspace{5mm}
\includegraphics[width=7.5cm]{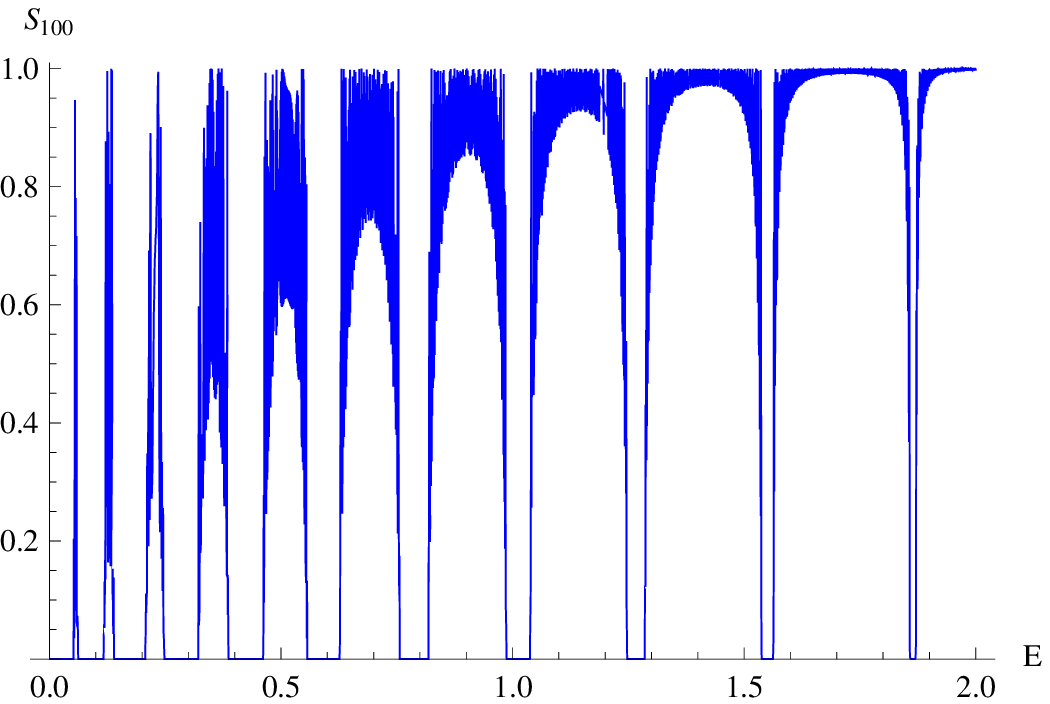}
\hspace{2mm}
\includegraphics[width=7.5cm]{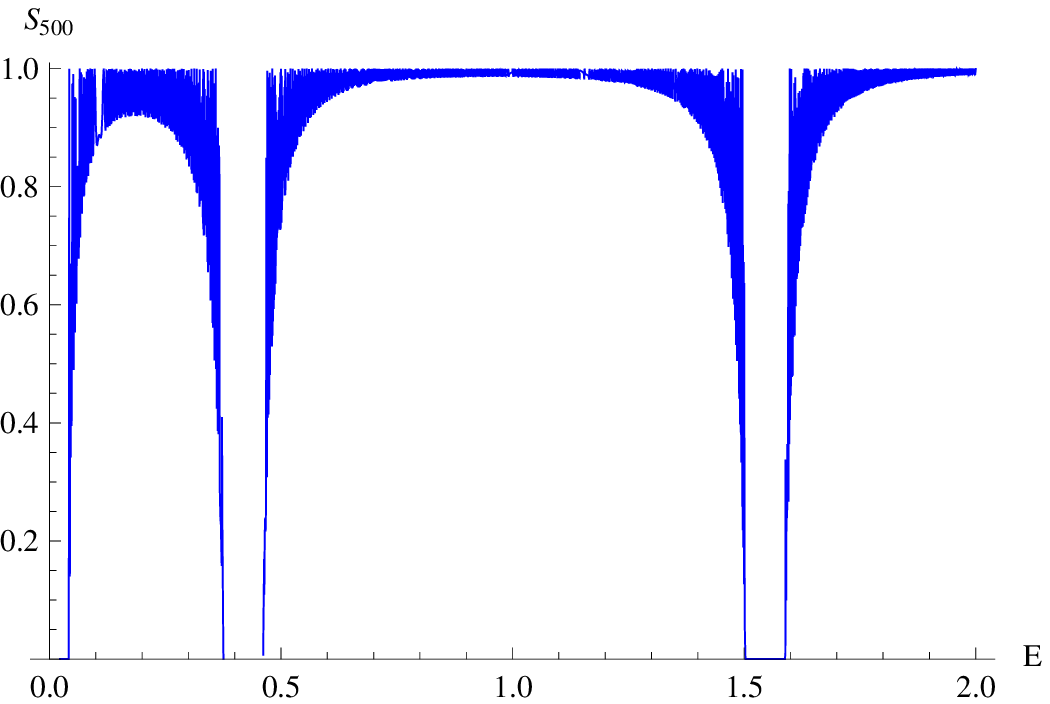}
\vspace{5mm}
\includegraphics[width=7.5cm]{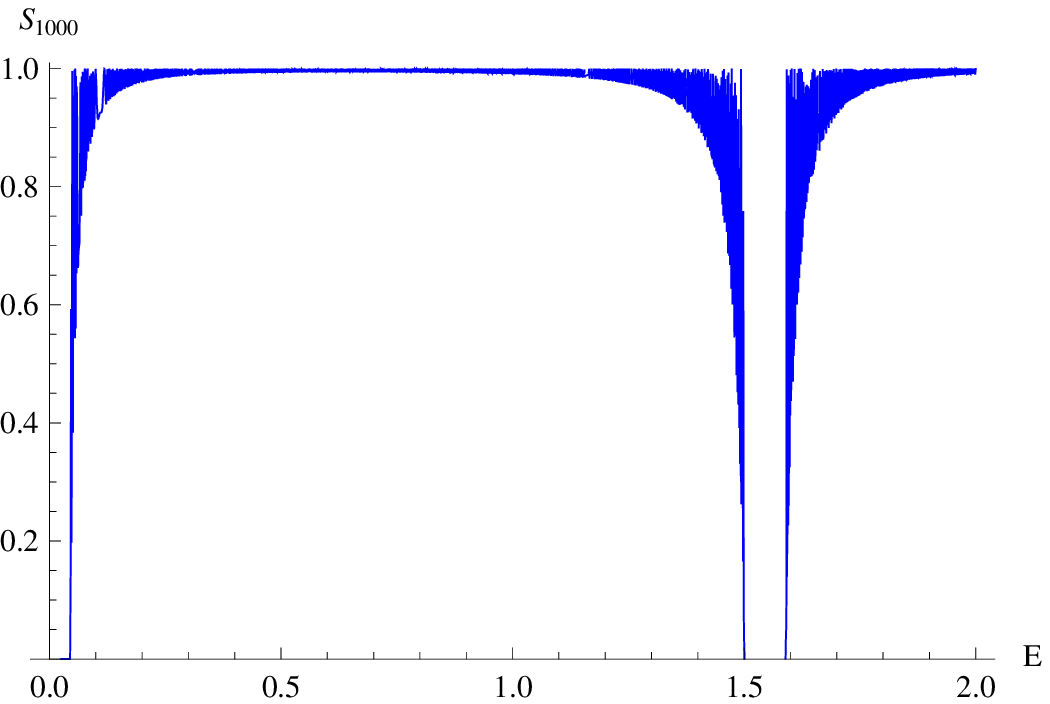}
\hspace{2mm}
\includegraphics[width=7.5cm]{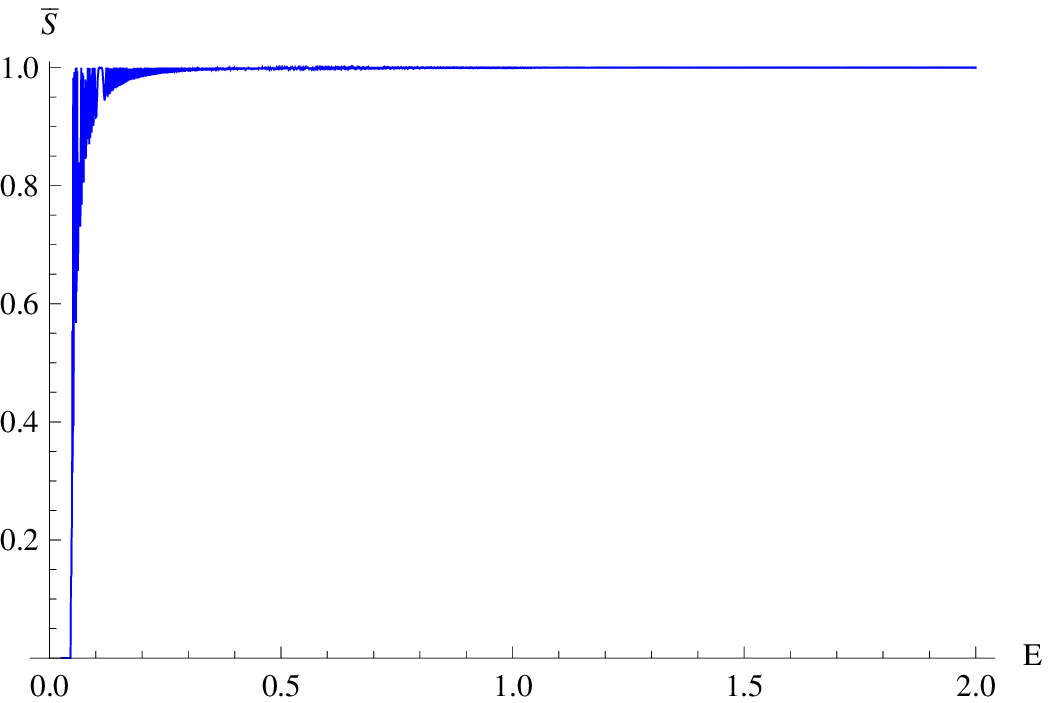}
\caption{$S_N$ vs.  $E$ for different values of $N$, with $V=0.5~eV$, $\gamma=0.1$ and $L=500~nm$. \\
{$\bar{S}$ vs. $E$ in the bottom right panel.}}\label{S}
\end{figure}

\noindent
Finally we would like to spend some words about the behavior of $S_N$,  as illustrated in Fig. \ref{S} for different values of $N$.
Indeed, Fig.  \ref{S}  shows the existence of certain energy values for which the corresponding values of $S_N$ are very small (band gaps hereafter), alternated to values for which
$S_N$ is close to 1. {The band} gaps (i.e. $S_N$ near zero) {correspond} to energy regions where $|\Phi_N|>1$; as a matter of fact, in these regions the Chebyshev polynomials are very large.
The separation between {the}
band gaps increases proportionally to $N$.
Note, {however,} that in the continuum limit {the} forbidden energy bands disappear and the transmission coefficient, {for $E\gg 1$, tends to the unity}.

The value $E_o$, defined in \eqref{Eo}, can be endowed with a physical interpretation which comes from the analysis of the dispersion
relation of the model, discussed in Sec. \ref{sec:sec4}.
Here it suffices to note that, for large $N$, the transmission coefficient starts to admit
values of order $O(1)$ only for energies exceeding a lower threshold given by $E_o+ \small{o}(1)$,
cf. Fig. \ref{S}. Such lower band gap still occurs in the continuum limit, in which case the upper bound of such
forbidden energy band is   $E_o$.

We finally remark that the explicit expression of $\bar{S}$ in Eq. (\ref{SKP2}) constitutes the analytic backbone
of the numerical results previously reported in Ref. \cite{col2013}.


\section{Dispersion relation}
\label{sec:sec4}
In this section we make a heuristic discussion about the physical interpretation of the continuum limit  treated in this paper and also determine the dispersion
relation of the model, in order to have an insight on its bulk properties. To this aim, {despite explicitly requiring $L$ to be finite, we impose the Born-von Karman periodic boundary conditions to Eq. \eqref{psiper}.
In this set up, the Bloch Theorem dictates that} the wavefunction inside the device takes{, hence,}  the form
\be
\Psi(x) = u(x) e^{i \xi x} \label{blochst}
\ee
where $\xi\in \mathbb{R}$ is the Bloch wavevector and $u(x+p)=u(x)$.\\
We now combine the continuity of the wavefunction and of its first derivative at the nodes with the  assumption \eqref{blochst}. That is, first we impose
\be\label{dr1}
\Psi\big\vert_{x=0^-}=\Psi\big\vert_{x=0^+} \quad \text{and} \quad \Psi^{'}\big\vert_{x=0^-}=\Psi^{'}\big\vert_{x=0^+}~
\ee
\be
\Psi\big\vert_{x=\lambda_N^-}=\Psi\big\vert_{x=\lambda_N^+} \quad \text{and} \quad
\Psi^{'}\big\vert_{x=\lambda_N^-}=\Psi^{'}\big\vert_{x=\lambda_N^+} \label{dr2}
\ee
and then, from  Eq. (\ref{blochst}), we may  rewrite $\Psi\big\vert_{x=\lambda_N^+}$ as
\be\label{dr3}
\Psi\big\vert_{x=\lambda_N^+}=\Psi\big\vert_{x=-\delta_N^+}~e^{i\xi p_N}
\ee
\noindent
Using the transfer matrix formalism of Sec. \ref{sec:sec2}, Eqs. (\ref{dr1}) can be cast in terms of the wave amplitudes as follows
\be
\left( \begin{array}{c}
A_{0} \\
A_{1}
\end{array} \right)=\mathbf{Q}_1  \left( \begin{array}{c}
B_{1} \\
B_{2}
\end{array} \right) \label{bl1} \quad
\ee
with
$$
\mathbf{Q}_1=\mathbf{T}^{-1}[ik,0]\mathbf{T}[z,0]
$$
On the other hand, using \eqref{dr3}, Eqs. \eqref{dr2} can be rewritten as

\be
\quad \left( \begin{array}{c}
A_{0} \\
A_{1}
\end{array} \right)=\mathbf{Q}_2  \left( \begin{array}{c}
B_{1} \\
B_{2}
\end{array} \right)  \quad
\ee
with
$$
\mathbf{Q}_2=e^{-i\xi p_N}\mathbf{\Delta}[ik\delta_N]\mathbf{T}^{-1}[ik,0]\mathbf{T}[z,0]\mathbf{\Delta}[ik\lambda_N]
$$
Hence, we get
\be
\left[\mathbf{Q}_1-\mathbf{Q}_2\right]  \left( \begin{array}{c}
B_{1} \\
B_{2}
\end{array} \right)=\left( \begin{array}{c}
0 \\
0
\end{array} \right) \label{bl3}
\ee
which admits nontrivial solutions when
\be
\det\left[\mathbf{Q}_1-\mathbf{Q}_2\right]=0  \label{det}
\ee
Equation (\ref{det}) yields the relation
\be
\cos(k\delta_N) \cosh(z\lambda_N) + \frac{z^2 - k^2}{2 k z} \sin(k\delta_N) \sinh(z\lambda_N)=\cos(\xi p_N) \label{disprelat} \quad
\ee
which may be solved to express the energy $E$ in terms of the Bloch wavevector $\xi$.\\
Figure \ref{bandsN}, obtained by numerically inverting Eq. (\ref{disprelat}), illustrates the dispersion relation $E$ vs. $\xi$ for different values of $N$: note that, as $N$ increases, the band gaps decrease and the function $E(\xi)$ approaches the continuous free particle curve.
Figure \ref{bandsN} also shows that, for any finite $N$, the band gaps occur for $\xi p_N= j \pi$, with $j\in \mathbb{N}$.

\begin{figure}[h!]
\centering
\includegraphics[width=9.1cm]{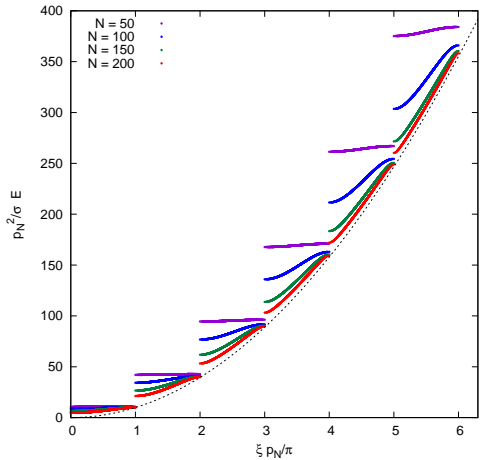}
\caption{Dispersion relation evaluated over the first six Brillouin
zones for fixed   $V=0.5$, $\gamma=0.1$ and
$L=500$  and different values of $N$. Namely,
$N=50$ (violet curve), $N=100$ (blue curve), $N=150$ (green curve),
$N=200$ (red curve) and the free particle model (black dashed line).
 }\label{bandsN}
\end{figure}
\noindent
It is worth remarking that, due to the rescaling used,  Figure \ref{bandsN} does not capture the
complete picture in the asymptotic behavior. In particular, the curves in  Figure \ref{bandsN}, as $N$ grows, tend to the free particle dispersion relation, while this is not the case for the finite K-P model, whose continuum limit admits a quadratic dispersion relation with an initial
band gap, corresponding to the energy range $[0,E_o)$. Indeed,
using  \eqref{fiN}, and that
\be
\cos(\xi p_N) = 1 - \frac{L^2\xi^2}{2 N^2}+o\left(\frac{1}{N^2}\right) \label{xilim}
\ee
one has, as $N\to \infty$, that the continuum limit dispersion relation reads
\be
E(\xi)=E_o+\hspace{0.7mm}\xi^2 \label{contlim}
\ee
Note also, from Eq. (\ref{contlim}), that in order to preserve the structure of the Bloch wavefunctions, with a real-valued wavevector $\xi$, one must require $E\geq E_o$.
No other band gaps occur {for $E\geq E_o$} in the continuum limit.\\
Thus, the resulting envelope of the periodic sequence of barriers and wells corresponds, in the continuum limit, to a single barrier of length $L$ and uniform height $E_o$, cf. Fig. \ref{barriers2}. Note, also, from Eq. \eqref{SKP2}, that for $E<E_o$ the asymptotic transmission coefficient $\bar{S}$ decreases, with $L$, as fast as $\left[\sinh\left({L\sqrt{E_o-E}}\;\right)\right]^{-2}$.

\begin{figure}[h!]
   \begin{center}
   \includegraphics[width=0.7\textwidth]{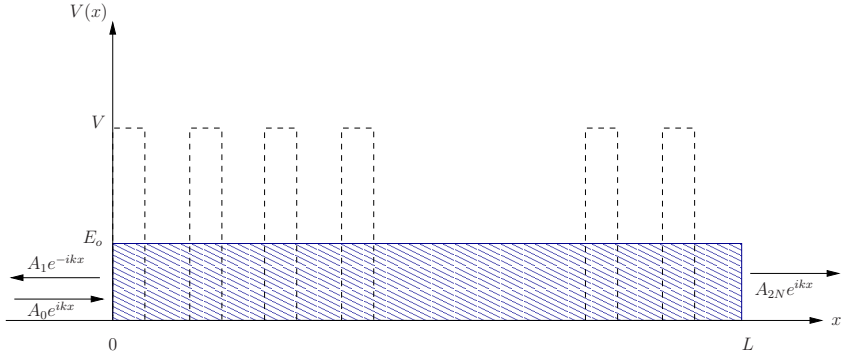}
   \caption{Continuum limit of the finite {Kronig-Penney} model: a uniform potential $E_o$ {is spread} over the domain $[0,L]$.}\label{barriers2}
   \end{center}
\end{figure}

\noindent
It is also useful to discuss how the continuum limit treated in this paper is related the limiting procedure employed, in Ref. \cite{kp}, to simplify Eq. (\ref{disprelat}) in the original K-P model. The technique considered in \cite{kp} amounts to replacing the sequence of rectangular barriers
with an array of Dirac delta functions, separated by a fixed distance $\delta$. \\
Mathematically, one considers the limits $\lambda\rightarrow 0$ and $V\rightarrow \infty$, such that the limit
\be
\lim_{\substack{\lambda\rightarrow 0\\ V\rightarrow\infty}}V \lambda = \Lambda  \label{limKP}
\ee
exists. Then, according to the limits above, the expression (\ref{disprelat}) takes the simplified structure
\be
P \frac{\sin(k\delta)}{k\delta}+\cos(k\delta)=\cos(\xi \delta) \label{DKP}
\ee
with $P=\Lambda \delta/2$.
When $P=0$, which would correspond to taking, in our model, $V=\small{o}(N)$ (i.e.,
for instance, a potential $V$ independent of $N$, as it was assumed in this work), Eq. (\ref{DKP}) gives rise to a continuous spectrum of energies (free particle regime), without band gaps. Note, however, that in the limit (\ref{limKP}) our parameter $\gamma$ {would vanish} (because in (\ref{limKP}) the width $\delta$ is kept constant, while $\lambda$ goes to zero).
In the continuum limit considered in this work, instead, the parameter $\gamma$ can take, in general, any real positive value, and is assumed to be independent of $N$.
Therefore, the effect of letting $\gamma\neq 0$, in the continuum limit of the finite K-P model, is the rise of a uniform potential $E_o$ for $x\in[0,L]$, and the presence, in the dispersion relation, of the band gap $[0,E_o)$. Note, in fact,
that the continuous spectrum, with $E_o=0$, obtained from Eq. (\ref{DKP}) with $P=0$, is recovered, in our model, for $\gamma=0$.\\

\section{Conclusions}
\label{sec:concl}

In this work we investigated the electronic transmission in a finite K-P model, by keeping the total length $L$ fixed and by varying the number $N$ of cells. We discussed, in particular, the behaviour of the transmission coefficient, of the Landauer resistivity, and of the dispersion relation in the continuum limit.\\
The analysis of such limit reveals that the particle, when letting $N$ diverge, behaves as it were subjected, along the domain $[0,L]$, to a uniform potential $E_o$. Therefore, for energy values larger than the latter threshold, the bands asymptotically disappear and the device exhibits the properties of a conductor. This asymptotic behaviour differs, in general, from that of the original K-P model, and we also clarified the connection between the two models. We stress that by introducing, in the finite K-P model, a $N$-dependence in the parameters $\gamma$ and $V$, one may access a variety of different electronic transport properties, which resembles the multitude of regimes obtained, in the original K-P model, by tuning the parameter $P$.\\
Finally, our study may offer an insight on the physics of the multi-barrier devices considered in the semiconductor technology, and is prone to be generalized also in presence of disorder \cite{izra} and external fields, by further elaborating the large deviation methods reported in Refs. \cite{col2013,colronpiz}.

\section*{Acknowledgements}
This work  has been partially supported by the Brazilian  agencies
Conselho Nacional de Desenvolvimento Cient\'{\i}fico e Tecnol\'ogico
(CNPq),  Coordena\c{c}\~ao de Aperfei\c{c}oamento de Pessoal de N\'ivel Superior (CAPES - Bolsista Jovem Talento BJT)
and  Funda{\c{c}}\~ao de Amparo \`a  Pesquisa do Estado de Minas Gerais (FAPEMIG - Programa de Pesquisador Mineiro).

\noindent
M.~C. wishes to thank Lamberto Rondoni for clarifying discussions.


\begin{thebibliography}{99}

\bibitem{kp} Kronig, R.~de L., Penney, W.~G.: \textit{Quantum Mechanics of Electrons in Crystal Lattices}. Proc. Roy. Soc. A \textbf{130}, 499--513 (1930)

\bibitem{bloch} Bloch, F.: \textit{\"{U}ber die Quantenmechanik der Elektronen in Kristallgittern}. Z. Physik \textbf{52}, 555--600 (1928)



\bibitem{cho} Cho, H.-S., Prucnal, P. R.: \textit{New formalism of the Kronig-Penney model with application to superlattices}. Phys. Rev. B \textbf{36}, 3237--3242 (1987)

\bibitem{esp} Esposito, S.: \textit{Multi-barrier tunneling}. Phys.Rev. E \textbf{67}, 016609 (7 pp) (2003)

\bibitem{esatsu} Esaki, L., Tsu, R.: \textit{Superlattice and Negative Differential Conductivity in Semiconductors}. IBM J. Res. Dev. \textbf{14}, 61--65 (1970)

\bibitem{esatsu2} Esaki, L., Tsu, R.: \textit{Tunneling in a finite superlattice}.  Appl. Phys. Lett. \textbf{22}, 562--564 (1973)

\bibitem{rau} Rauch, C., Strasser, G., Unterrainer, K., Boxleitner, W., Gornik, E., Wacker, A.: \textit{Transition between coherent and incoherent electron transport in GaAs/GaAlAs superlattices}. Phys. Rev. Lett. \textbf{81}, 3495--3498 (1998)

\bibitem{park} Park, C.-H., Yang, L., Son, Y.-W.,, Cohen, M. L., Louie, S. G.: \textit{Anisotropic behaviours of massless Dirac fermions in graphene under periodic potentials}. Nature Physics \textbf{4}, 213--217 (2008)

\bibitem{gatt} Gattenl\"{o}hner, S., Belzig, W., Titov, M.: \textit{Dirac-Kronig-Penney model for strain-engineered graphene}. Phys. Rev. B \textbf{82}, 155417--155427 (2010)

\bibitem{masir} Masir, M. R., Vasilopoulos, P., Peeters, F. M.: \textit{Kronig-Penney model of scalar and vector potentials in graphene}.  J. Phys.: Condens. Matter \textbf{22}, 465302--465312 (2010)

\bibitem{kam} Kaminaga, M, Nakano, F.: \textit{The Landauer Resistivity on Quantum Wires}. J. Stat. Phys. \textbf{111}, Nos. 1/2, 339--353 (2003)




\bibitem{col2013} Colangeli, M., Rondoni, L.: {\it Fluctuations in quantum one-dimensional thermostatted systems with off-diagonal disorder}. Journal of Statistical Mechanics P02009 (16 pp) (2013)


\bibitem{leo2011}  De Leo, S., Rotelli, P.:  {\it Wave and particle limit for multiple barrier tunneling}. Journal of Physics A: Mathematical and Theoretical {\bf 44}, 435305 (15pp) (2011)

\bibitem{will} Williams, K. S.: \textit{The nth power of a 2$\times$2 matrix}. Math. Mag. \textbf{65}, 336 (1992)

\bibitem{abram} Abramowitz, M., Stegun, I. A.: Handbook of Mathematical Functions with Formulas, Graphs, and Mathematical Tables, New York, Dover (1965)





\bibitem{izra} Izrailev, F. M., Krokhin, A. A., Makarov, N. M.: \textit{Anomalous localization in low-dimensional systems with correlated disorder}. {\em Phys. Rep.} \textbf{512}, 125--254 (2012)

\bibitem{colronpiz} Colangeli, M., Pizzi, M., Rondoni, L.: \textit{Current in a quantum driven thermostatted system with off-diagonal disorder}. Physica A \textbf{392}, 2977--2987 (2013)

\end{thebibliography}
\end{document}